\def\BibTeX{{\rm B\kern-.05em{\sc i\kern-.025em b}\kern-.08em
    T\kern-.1667em\lower.7ex\hbox{E}\kern-.125emX}}
\theoremstyle{plain}
\newtheorem{theorem}{Theorem}
\begin{document}
 
\title{Privacy-Utility-Fairness: A Balanced Approach to  Vehicular-Traffic Management System
}
 \author{\IEEEauthorblockN{1\textsuperscript{st} Poushali Sengupta}
 \IEEEauthorblockA{\textit{Department of Informatics} \\
 \textit{University of Oslo}\\
Oslo, Norway \\
 poushals@uio.no}
 \and
\IEEEauthorblockN{2\textsuperscript{nd} Sabita Maharjan}
\IEEEauthorblockA{\textit{Department of Informatics} \\
\textit{University of Oslo}\\
Oslo, Norway\\
sabita@ifi.uio.no}
\and
\IEEEauthorblockN{3\textsuperscript{rd} Frank Eliassen}
\IEEEauthorblockA{\textit{Department of Informatics} \\
\textit{University of Oslo}\\
Oslo, Norway \\
frank@ifi.uio.no}
\and
\IEEEauthorblockN{4\textsuperscript{th} Yan Zhang}
\IEEEauthorblockA{\textit{Department of Informatics} \\
\textit{University of Oslo}\\
Oslo, Norway \\
yanzhang@ieee.org}
 }

\maketitle
\begin{abstract}

Location-based vehicular traffic management faces significant challenges in protecting sensitive geographical data while maintaining utility for traffic management and fairness across regions. Existing state-of-the-art solutions often fail to meet the required level of protection against linkage attacks and demographic biases, leading to privacy leakage and inequity in data analysis. In this paper, we propose a novel algorithm designed to address the challenges regarding the balance of privacy, utility, and fairness in location-based vehicular traffic management systems. In this context, utility means providing reliable and meaningful traffic information, while fairness ensures that all regions and individuals are treated equitably in data use and decision-making. Employing differential privacy techniques, we enhance data security by integrating query-based data access with iterative shuffling and calibrated noise injection, ensuring that sensitive geographical data remains protected. We ensure adherence to epsilon-differential privacy standards by implementing the Laplace mechanism. We implemented our algorithm on vehicular location-based data from Norway, demonstrating its ability to maintain data utility for traffic management and urban planning while ensuring fair representation of all geographical areas without being overrepresented or underrepresented. Additionally, we have created a heatmap of Norway based on our model, illustrating the privatized and fair representation of the traffic conditions across various cities. Our algorithm provides privacy in vehicular traffic management by effectively balancing fairness and utility.
\end{abstract}

\begin{IEEEkeywords}
Differential Privacy, Utility, Fairness, Iterative Shuffling.
\end{IEEEkeywords}

\section{Introduction}

The location-based vehicular-traffic management systems (LBVMS),\cite{moni} collects and analyzes extensive geographical data, which are inherently raising significant privacy concerns due to the detailed and sensitive nature of the information processed. The geographical data processed in LBVMS typically includes user GPS coordinates, travel patterns, origin-destination data, and time-stamped location logs. In LBVMS, such analysis of the data can reveal sensitive information such as a person’s home, workplace, or regular travel patterns. For example, in 2017, a data leak exposed the location information of approximately $800,000$ electric Volkswagen vehicles, allowing potential attackers to access precise driver movements and personal details \cite{leak}. To address these concerns, various privacy-preserving mechanisms may help to protect privacy in LBVMS. Methods like spatial cloaking techniques \cite{spcl} generalize a user's exact location into a broader area shared by multiple users, thereby providing k-anonymity \cite{k} and reducing the risk of re-identification \cite{re}. 
\par Additionally, differential privacy methods \cite{dp} introduce controlled noise into location data, ensuring that individual user information remains confidential. On the other hand, the utility of this data must not be compromised, as it is fundamental for accurate traffic forecasting, route optimization, and congestion management. In data privacy contexts, "utility" refers to the usefulness and accuracy of data after it has undergone privacy-preserving transformations, ensuring that the data remains valuable for analysis and decision-making \cite{b7}. 
 \par Moreover, ensuring fairness \cite{fair} is essential to avoid any demographic or regional bias in data analysis, which could lead to unequal resource allocation and decision-making. For example, if traffic prediction models are trained predominantly on data from urban areas, they may underperform in rural regions, resulting in inadequate infrastructure development. Similarly, models that overlook socioeconomic disparities might prioritize affluent neighbourhoods for traffic mitigation measures, neglecting less privileged communities. Addressing these fairness concerns is essential to ensure equitable transportation planning and resource distribution across all regions and demographics. Two primary types of fairness are individual fairness \cite{ifair}, which asserts that similar individuals should receive similar outcomes, ensuring equitable treatment on a case-by-case basis, and group fairness \cite{gfair}, also known as statistical or demographic parity \cite{demo}, which ensures that different demographic groups receive comparable outcomes, preventing systemic biases against any particular group. In LBVMS, group fairness can be suitable as it ensures equitable treatment across diverse regions and populations \cite{lfair}, thereby preventing biases that could lead to overrepresentation or underrepresentation in resource allocation and decision-making. Therefore, achieving an optimal balance between privacy, utility, and fairness is crucial in LBVMS \cite{b1}. 
\par Several state-of-the-art algorithms have been developed to address these challenges. However, they all have their limitations. The authors of \cite{b2} introduced a model that protects user data by adding Planar Laplace noise to GPS points, which may impact traffic monitoring accuracy. The author in \cite{b3} used a Staircase Randomized Response to boost location privacy, which, while increasing service utility, may reduce data accuracy and not fully ensure fairness. The author in \cite{b4} introduced the Rényi Fair Information Bottleneck Method (RFIB) to balance utility, fairness, and model compactness, but achieving this balance can be complex and varies with the dataset and context. The authors of \cite{b5} proposed enhancing user privacy in location-based services by eliminating centralized anonymizers and leveraging pseudonymous authentication within a decentralized framework. However, the approach may introduce complexities in managing trust among peers and ensuring system-wide security without a central authority. The author in \cite{b6} aimed to balance user location privacy with service quality by introducing a Quality of Experience (QoE) metric to quantify data utility under varying privacy levels.  However, it overlooked fairness considerations, potentially leading to unequal impacts on individual or group experiences within the system. Recent studies like \cite{fairtp} have proposed a framework that aims to improve fairness in traffic prediction, but dynamic traffic and uneven sensor distribution across regions can challenge sustained fairness, potentially reducing prediction accuracy in underrepresented areas.

\par Motivated by the challenges in existing systems, in this work, we make the following contributions:

\begin{itemize}
    \item \textbf{Balanced Privacy, Utility, and Fairness:} We propose a new framework using Differential Privacy and Query-based data access for LVBMS that balances privacy, utility, and fairness. 
    \item \textbf{Dual-Stage Shuffling and Proportional Representation:} Our algorithm introduces a dual-stage iterative shuffling process for anonymization that disrupts patterns vulnerable to linkage attacks while maintaining proportional representation across diverse regions. 
    
    \item \textbf{Experiment and Interactive Heatmap:} We validate our algorithm using Norway's vehicular-traffic data and created an interactive heatmap. This heatmap shows privatized, fair, and useful traffic predictions, offering insights for traffic management.
\end{itemize}


\section{Proposed Framework}



The proposed location-based Vehicular-traffic Management System (LBVMS) operates within an IoT-edge-cloud architecture, possessing main system parameters to perform seamless data processing while protecting user privacy. The IoT layer consists of connected devices that collect raw traffic data, including vehicle counts, speeds, and precise geolocations, which are represented as $\mathcal{D} = \{d_1, d_2, \dots, d_n\}$. These data points include sensitive attributes such as license plate numbers, vehicle identifiers, timestamps, driver characteristics, and more, referred to as Personally Identifiable Information (PII). Data anonymization techniques are applied to PII to mitigate reidentification risks before transmitting the data by generating a unique identity combining all PII, making it difficult to extract specific details while still uniquely distinguishing one vehicle from another.  Edge servers, positioned closer to data sources, perform query and data processing and aggregation, allowing for real-time analysis and reducing the need to transmit large volumes of raw data to the cloud. Queries are defined as $Q: \mathcal{D} \times \mathcal{C} \to \mathcal{R}$, where $\mathcal{C}$ specifies query constraints like various features, and $\mathcal{R}$ is the resulting dataset with required features based on the query. The framework uses a two-stage iterative shuffling mechanism at the edge to enhance privacy and prevent linkage attacks. 
\par Shuffling permutes $\mathcal{R}$ using a permutation function $\sigma$, mathematically represented as $\mathcal{R}' = \sigma(\mathcal{R})$, where $\sigma: \{1, 2, \dots, k\} \to \{1, 2, \dots, k\}$. The shuffling operation is performed iteratively for $n$ iterations to strengthen anonymisation. At each iteration $t$, the permutation function $\sigma_t$ is updated with randomness, represented as $\sigma_{t+1} = f(\sigma_t, \mathcal{P})$, where $\mathcal{P}$ is a random perturbation factor. Also, differential privacy mechanisms, such as the Laplace mechanism parameterized by a privacy budget $\epsilon$, are implemented at the edge to protect user location data further. The noisy-shuffled data set is $\mathcal{R''}$. 

\par The National Vehicular Transportation Authority (NVTA) manages the cloud layer and performs high-level analyses to identify traffic patterns and trends.  When relevant data reaches NVTA, it has been anonymized and aggregated, ensuring individual user privacy is protected. The system is designed to maintain fairness and proportional representation of subgroups $G_j$ within the dataset, ensuring group proportions $\phi_j = \frac{|G_j|}{|\mathcal{D}|}$ are preserved across processing stages. These parameters form the foundation for the privacy-preserving and efficient operation of the LBVMS, enabling secure, real-time vehicular-traffic management. Let us assume we have a query $\textbf{Q}$: "\textbf{traffic density in region $R$ at time $t$}", then \textbf{Q} will map to:
\begin{equation}
   \mathcal{R}: Q(\mathcal{D}, \mathcal{C}) = \{d_i \mid \text{location}(d_i) \in R, \text{timestamp}(d_i) = t\}.
\end{equation}
Where $\mathcal{D}$ is the user vehicular data within the region $R$ at the timeframe $t$.
The query function reduces the data surface accessible to the system, thereby minimizing the exposure of sensitive attributes irrelevant to the query.
\par Let $\mathcal{R} = \{r_1, r_2, \dots, r_k\}$ represent the result set returned by the query function $Q$. To enforce group fairness, the shuffling process is performed batch-wise, ensuring that the proportional representation of all subgroups in the dataset is maintained. The detailed Fairly Iterative Shuffling (FIS) technique is inspired by the work \cite{fairly}. The dataset $\mathcal{D}$ is divided into disjoint groups $G_j; j = 1, \dots m$, where each group represents a specific proportion $\phi_j = \frac{|G_j|}{|\mathcal{D}|}$, reflecting its size relative to the entire dataset. The shuffling process operates in two stages: first, local shuffling is performed within each group $G_j$ using a permutation function $\sigma_j$ to anonymize data points within the group. Second, a global shuffling step applies a permutation $\sigma$ across all shuffled groups $G_j'$. Local shuffling ensures anonymity within groups by randomizing the internal order of data points, while global shuffling disrupts relationships between groups, further enhancing anonymity. This two-step process preserves the size and proportional representation of each group, ensuring that $\phi_j' = \phi_j$ in the shuffled dataset $\mathcal{R}' = \sigma\left(\bigcup_{j=1}^m G_j'\right)$. Iterative repetitions of this batch-wise shuffling technique further reduce variance in group representation, converging to the original proportions in $\mathcal{D}$, thereby achieving group fairness while anonymizing data to mitigate re-identification risks. Fig. \ref{prv} demonstrates how duel iterative shuffling works in our algorithm. Each color in the diagrams corresponds to a specific area on the map, indicated by pins of the same color, linking the data explicitly to its geographic origin. Table \ref{tab} illustrates how the dataset looks before and after shuffling.
\begin{figure}
    \centering
    \includegraphics[width=\linewidth, height = 9 cm]{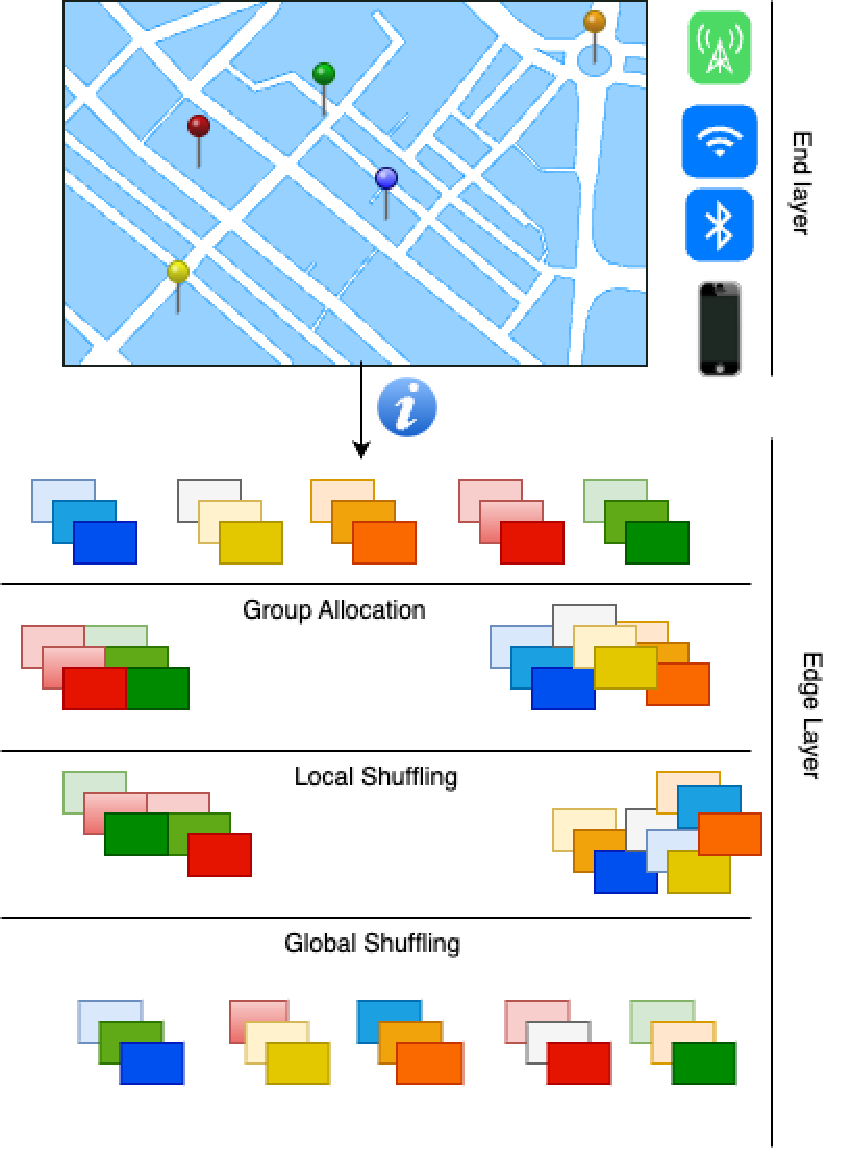}
    \caption{ Local and Global Shuffling}
    \label{prv}
\end{figure}

\begin{table}[ht]
\small
\centering
\caption{Comparison of Original and Shuffled Datasets}
\label{tab:dataset_comparison}
\begin{tabular}{>{\raggedright\arraybackslash}p{0.45\columnwidth} >{\raggedright\arraybackslash}p{0.45\columnwidth}}
\toprule
\textbf{Original Dataset} & \textbf{Shuffled Dataset} \\
\midrule
\textbf{Anonymized Vehicle ID, Location, Speed(km/h), Count} & \textbf{Anonymized Vehicle ID, Location, Speed (km/h), Count} \\
A1B2C3, (59.91, 10.75), 60, 5 & D4E5F6, (59.91, 10.75), 55, 3 \\
D4E5F6, (59.92, 10.76), 55, 3 & N4N5O1, (59.90, 10.74), 60, 6 \\
G7H8I9, (59.91, 10.75), 50, 4 & A1B2C3, (59.92, 10.76), 60, 5 \\
J1K2L3, (59.90, 10.74), 65, 2 & G7H8I9, (59.92, 10.76), 50, 4 \\
N4N5O1, (59.92, 10.76), 60, 6 & J1K2L3, (59.91, 10.75), 65, 2 \\
\bottomrule
\end{tabular}
\label{tab}
\end{table}

\par After the data has been shuffled, the framework ensures differential privacy by injecting carefully noise into the data using the Laplace mechanism, which ensures that the inclusion or exclusion of an individual's data at any single time point does not significantly affect the overall analysis, thereby protecting individual privacy across time \cite{dp}. Noise is injected using the Laplace mechanism in  $\mathcal{R}'$ and the noisy dataset will be, $\mathcal{R}'' = \mathcal{R}' + \mathcal{N}$, where, $\mathcal{N} \sim \text{Laplace}(0, b),$ and $b$ is the scale parameter defined as $b = \frac{\Delta f}{\epsilon}$. Here, $\Delta f$ is the sensitivity of the query function $Q$, defined as, $\Delta f = \max_{\mathcal{D}, \mathcal{D}'} \|Q(\mathcal{D}) - Q(\mathcal{D}')\|_1$ where, $\mathcal{D}$ and $\mathcal{D}'$ differ by at most one record and  $\epsilon$ is the privacy budget. The noise injection process is governed by $\epsilon$, which controls the trade-off between privacy and utility through  Differential Privacy, which ensures that the probability of a query result from any dataset differs from a similar dataset by at most $e^\epsilon$ \cite{dp}. A smaller $\epsilon$ enhances privacy at the utility cost, while a larger $\epsilon$ improves utility. The framework implements a privacy budget allocation strategy \cite{dp}, where the total privacy budget $\epsilon_T$ is distributed across multiple queries. Subsequent queries consume progressively smaller portions of the budget, ensuring cumulative privacy leakage remains within acceptable limits. 
\par To ensure longitudinal privacy, the total privacy budget $\epsilon_T$ is distributed across multiple queries $Q_1, Q_2, \dots, Q_m$ using a decaying allocation strategy, $\epsilon_T = \sum_{i=1}^m \epsilon_i$; where $\epsilon_i$ decreases with each subsequent query, ensuring stronger privacy guarantees over time. To optimize the performance of the system, the framework adopts an empirical risk minimization strategy to minimize the overall loss function $\mathcal{L}(U, P)$,
\begin{equation}
   \mathcal{R}isk(U, P)=\arg \min_{\epsilon, \sigma} \mathcal{L}(U, P) = \alpha \cdot (1 - U) + \beta \cdot P,
\end{equation}
where $\mathcal{L}$ is the total loss, $U$ represents the utility of the data and $P$ represents the privacy loss. $\alpha$ and $\beta$ are trade-off parameters.
\begin{theorem}
    Iterative shuffling ensures that the proportional representation of each group $ G_j$ in the dataset $\mathcal{D}$ is preserved in the shuffled dataset $\mathcal{R}'$ with high probability as the number of iterations $n \to \infty$.
\end{theorem}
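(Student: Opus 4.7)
The plan is to exploit the fact that both stages of the shuffling operate as bijections on finite (multi)sets and therefore preserve group cardinalities exactly at each iteration, and then to argue that any remaining iteration-level randomness is averaged out by concentration as $n \to \infty$. This separates a deterministic cardinality-conservation argument from a probabilistic convergence argument.

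First, I would establish per-iteration invariance. Each local shuffle $\sigma_j : G_j \to G_j$ is a permutation of the set $G_j$, so $|G_j'| = |G_j|$ after the local stage. The global shuffle $\sigma$ then acts on $\bigcup_{j=1}^m G_j'$ as a single permutation, relabeling positions without deleting, duplicating, or migrating records across groups. Hence for every iteration $t$, the cardinality of $G_j$ in the shuffled result $\mathcal{R}'_t = \sigma_t\!\bigl(\bigcup_j G_j'\bigr)$ equals $|G_j|$, and dividing by $|\mathcal{D}|$ gives $\phi_j'^{(t)} = \phi_j$ deterministically.

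Second, I would incorporate the randomness that makes the statement probabilistic rather than trivially exact. The update $\sigma_{t+1} = f(\sigma_t, \mathcal{P})$ introduces a random perturbation $\mathcal{P}$, and in practice batch-level subsampling may cause the measured per-iteration proportion $\hat\phi_j^{(t)}$ to fluctuate around $\phi_j$ even though its mean is $\phi_j$ by Step 1. Since $\hat\phi_j^{(t)} \in [0,1]$ and the iterations can be modeled as independent draws conditional on the shuffling randomness, Hoeffding's inequality applied to $\bar\phi_j^{(n)} = \frac{1}{n}\sum_{t=1}^n \hat\phi_j^{(t)}$ yields
\begin{equation}
\Pr\!\left(\bigl|\bar\phi_j^{(n)} - \phi_j\bigr| \ge \delta\right) \le 2\exp(-2n\delta^2),
\end{equation}
which tends to $0$ as $n\to\infty$ for any $\delta>0$, giving the stated high-probability preservation of proportions. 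A union bound over the finitely many groups $G_1,\dots,G_m$ extends the guarantee to all groups simultaneously.

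The main obstacle I expect is pinning down exactly what randomness makes this an asymptotic probabilistic statement instead of a deterministic identity: a pure permutation argument already gives $\phi_j' = \phi_j$ at every step. I would therefore carefully formalize the perturbation $\mathcal{P}$ (and any batch-sampling noise) as bounded and mean-preserving relative to the group assignment, since the concentration step rests entirely on this unbiasedness and boundedness. Once that is justified, the two-step argument closes: Step 1 fixes the mean, Step 2 kills the variance.
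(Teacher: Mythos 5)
Your proposal is correct and shares the paper's high-level skeleton (show the mean of the per-iteration proportion is $\phi_j$, then show deviations vanish as $n\to\infty$), but your execution is more rigorous on both halves. Where the paper merely asserts that $\mathbb{E}[\phi_j^{\prime(n)}]=\phi_j$ ``since the shuffling is random'' and then posits $\mathrm{Var}(\phi_j^{\prime(n)})\propto \frac{1}{n}$ without derivation (indeed, the quantity it writes down, $\frac{1}{n}\sum_{t=1}^n(\phi_j^{\prime(t)}-\phi_j)^2$, is an empirical average of squared deviations rather than a variance whose $1/n$ decay is established), you first prove the mean-preservation step deterministically --- local and global shuffles are bijections, so group cardinalities and hence $\phi_j^{\prime(t)}=\phi_j$ are conserved exactly --- and then obtain the decay via Hoeffding's inequality with an explicit rate $2\exp(-2n\delta^2)$ and a union bound over the $m$ groups, which the paper omits. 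Your closing observation is also the sharpest point of the comparison: a pure permutation argument already yields $\phi_j^{\prime}=\phi_j$ as an identity, so the ``high probability as $n\to\infty$'' framing is only meaningful if some additional randomness (the perturbation $\mathcal{P}$, batch subsampling, or measurement of proportions on sub-batches) genuinely perturbs the observed proportions; the paper's proof glosses over exactly this tension, first noting that shuffling does not change group sizes and then treating $\phi_j^{\prime(n)}$ as a fluctuating random variable. Making that source of randomness explicit, as you propose, is what the argument needs to be fully sound; your version buys a quantitative, non-asymptotic guarantee at the cost of having to formalize an assumption (bounded, mean-preserving per-iteration proportions) that the paper leaves implicit.
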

 
\begin{proof}
   We have the original dataset $\mathcal{D}$, divided into $m$ groups $G_1, G_2, \dots, G_m$ , with each group $G_j$ containing $|G_j|$ elements. The total dataset size is $|\mathcal{D}|$.  After a single iteration, the shuffled dataset is, $\mathcal{R}' = \sigma\left(\bigcup_{j=1}^m G_j'\right)$. The proportion of group $G_j$ in the shuffled dataset after one iteration can be expressed as: $\phi_j^{(1)} = \frac{|G_j^{'(1)}|}{|\mathcal{R}'|}$, where $ |G_j^{'(1)}| $ represents the number of elements from group $G_j$ after local shuffling in the shuffled dataset $\mathcal{R}'$. As the local shuffling happens within internal elements, it does not change the number of elements in the group. After $n$ iterations of global shuffling, let $\phi_j^{(n)}$ represent the proportion of group $G_j$ in the shuffled dataset. Since the shuffling is random, the expected value of $\phi_j^{'(n)}$ remains equal to the original proportion $\phi_j$:
\begin{equation}
    \mathbb{E}[\phi_j^{'(n)}] = \phi_j
\end{equation}
To quantify the variability in group representation after $n$ iterations, we calculate the variance:
\begin{equation}
    \text{Var}(\phi_j^{'(n)}) = \frac{1}{n} \sum_{t=1}^n (\phi_j^{'(t)} - \phi_j)^2.
\end{equation}
The variance decreases with the number of iterations $n$, as randomness spreads the representation evenly across the shuffled dataset. Specifically:
$\text{Var}(\phi_j^{'(n)}) \propto \frac{1}{n}$. As $n \to \infty$, the variance approaches zero, $\text{Var}(\phi_j^{'(n)}) \to 0.$. This implies that the proportions $\phi_j^{'(n)}$ converge to the original proportions $\phi_j$: $\phi_j^{'(n)} \to \phi_j \quad \text{as} \quad n \to \infty.$  

\end{proof}
\begin{theorem}
The Laplace mechanism, combined with iterative shuffling, ensures that the error introduced by noise injection is uniformly distributed across all subgroups $G_j$ in the dataset, maintaining fairness through proportional representation in the final output dataset  $\mathcal{R}''$.
\end{theorem}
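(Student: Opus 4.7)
The plan is to decompose the claim into two independent properties of the pipeline and then combine them. First, I would observe that the Laplace mechanism adds i.i.d.\ noise $\mathcal{N}_i \sim \text{Laplace}(0,b)$ with $b=\Delta f/\epsilon$ to every element of the shuffled dataset $\mathcal{R}'$, without reference to group membership. Hence for any subgroup $G_j$, the noise variables $\{\mathcal{N}_i : r_i \in G_j\}$ are drawn from the same distribution as those for any other subgroup, with $\mathbb{E}[\mathcal{N}_i]=0$ and $\text{Var}(\mathcal{N}_i)=2b^2$. This immediately gives a per-element error distribution that is identical across all $G_j$, which is the precise sense in which the noise is ``uniformly distributed.''

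Second, I would invoke Theorem~1 to assert that the iterative shuffling preserves $\phi_j^{(n)} \to \phi_j$ as $n \to \infty$, so the number of elements belonging to $G_j$ in $\mathcal{R}''$ matches $|G_j|$ in the limit. Combining the two facts, I would compute the aggregate error on group $G_j$ as $E_j = \sum_{i : r_i \in G_j} \mathcal{N}_i$, showing $\mathbb{E}[E_j]=0$ and $\text{Var}(E_j) = 2b^2 |G_j|$, so that the normalized per-element error $E_j/|G_j|$ has variance $2b^2/|G_j|$, which depends only on group size and not on any group-specific property. Equivalently, the relative error per unit of representation is the same quantity for every subgroup, so no group is systematically favored or penalized by the noise injection.

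Finally, to tie this to the fairness conclusion, I would argue that because (i) the shuffling step preserves $\phi_j$ by Theorem~1 and (ii) the noise step is invariant across subgroups with zero mean, the proportional representation of $G_j$ in $\mathcal{R}''$ agrees in expectation with that in $\mathcal{D}$, while the variance contribution from noise is itself proportional to $|G_j|$. This is exactly the condition that the error is ``uniformly distributed'' and that ``proportional representation'' is maintained.

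The main obstacle I anticipate is formalizing what ``proportional representation in the noisy dataset'' means, since after noise injection the entries of $\mathcal{R}''$ are perturbed real values rather than group labels. The cleanest resolution is to interpret the proportion $\phi_j$ as being carried by the shuffled positions (which are unaffected by additive noise on the payload), so that Theorem~1 still applies verbatim; the noise then only affects the magnitude of the recorded quantity, and its group-agnostic distribution gives the uniform-error claim. If instead the authors intend proportions to be recovered from the noisy values themselves, one would additionally need a concentration argument (e.g., Chebyshev or a Laplace tail bound) showing that $|E_j|/|G_j|$ is small with high probability for every $j$, so that downstream group-level statistics remain close to the unnoised ones uniformly in $j$.
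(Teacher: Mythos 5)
Your proposal follows essentially the same route as the paper's proof: iterative (local plus global) shuffling preserves each group's proportion $\phi_j$, and because the Laplace noise is i.i.d.\ with zero mean and is applied independently of group membership, the error distribution is identical across all $G_j$ and the proportional representation carries over to $\mathcal{R}''$. One detail in your favor: you correctly take $\text{Var}(\mathcal{N}_i) = 2b^2$ for $\text{Laplace}(0,b)$ where the paper writes $b^2$, and your aggregate-error computation $\text{Var}(E_j) = 2b^2|G_j|$ together with the remark on how to interpret ``proportional representation'' after noising are refinements the paper's argument does not spell out.
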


\begin{proof}
    
 The shuffled dataset $\mathcal{R}'$ is obtained after applying local and global shuffling to the result set $\mathcal{R} = \{r_1, r_2, \dots, r_k\}$. The final noisy dataset $\mathcal{R}''$ is defined as, $\mathcal{R}'' = \mathcal{R}' + \mathcal{N}$, where $\mathcal{N} \sim \text{Laplace}(0, b)$ and $b = \frac{\Delta f}{\epsilon}$.
In local shuffling, the data points within each group $G_j$ are permuted. After local shuffling, the probability of any data point $r_i \in G_j$ being selected is:
\begin{equation}
    \Pr[r_i \in G_j \mid \mathcal{R}'] = \frac{1}{|G_j|},
\end{equation}
where $|G_j|$ is the size of group $G_j$. In global shuffling, all data points in $\mathcal{R}$ are randomly permuted across the entire dataset. This process preserves the overall proportional representation $\phi_j$, defined as, $\phi_j = \frac{|G_j|}{|\mathcal{R}|}$. After global shuffling, the probability of any data point $r_i \in G_j$ being selected is:
\begin{equation}
    \Pr[r_i \in G_j \mid \mathcal{R}'] = \phi_j.
\end{equation}
Thus, local and global shuffling together ensure proportional representation of groups in $\mathcal{R}'$.
The Laplace noise $\mathcal{N}$ is applied independently to each data point in $\mathcal{R}'$. The noise follows the distribution $\Pr[\mathcal{N} = x] = \frac{1}{2b} \exp\left(-\frac{|x|}{b}\right)$, where the expected noise magnitude and variance are $\mathbb{E}[\mathcal{N}] = 0$ and  $\text{Var}(\mathcal{N}) = b^2$. Since $\mathcal{N}$ is independent of group membership, the expected noise magnitude and variance are identical for all groups $G_j$. Since Laplace noise is unbiased and independent of group membership, it does not alter the proportional representation of groups in $\mathcal{R}''$:
\begin{equation}
    \Pr[r_i \in G_j \mid \mathcal{R}'' ] = \phi_j.
\end{equation}
The variance of the noisy data for group $G_j$ is, $\text{Var}(\mathcal{R}'' \mid G_j) = \text{Var}(\mathcal{R}' \mid G_j) + \text{Var}(\mathcal{N} \mid G_j)$. Using the properties of Laplace noise, $\text{Var}(\mathcal{N} \mid G_j) = b^2$. Since $\text{Var}(\mathcal{R}' \mid G_j)$ is determined by shuffling and is independent of $j$, the total variance for $\mathcal{R}''$ is uniform across all groups.
\end{proof}

\section{experiment}

\subsection{Data Availablity}
In this experiment, we generate a traffic dataset for 50 Norwegian cities and towns, capturing hourly traffic density over 24 hours. The data incorporates essential factors such as weather, day of the week, and holidays. Weekday peak hours (5:30 AM to 8 AM and 4 PM to 6 PM) exhibit higher traffic, particularly in larger cities, while weekend traffic decreases on Sundays and slightly increases on Saturdays. To make the simulation more realistic, we also included weather data and temperature variations ranging from -5°C to 25°C. Rainy and foggy conditions are modelled to increase traffic density by 10\%, reflecting slower driving speeds and occasional accidents, which could add up to 100 extra vehicles per hour \cite{speed}. Snowy conditions lead to a 20\% increase, representing greater driving challenges \cite{temp}. Conversely, clear weather results in a modest rise of 20 vehicles per hour, reflecting increased travel activity under favourable conditions \cite{peng}. The dataset and code are openly available as historical\textunderscore traffic\textunderscore data.csv  and location.py respectively at in \url{https://github.com/Psxxg/Fairly-Private}.

\subsection{Result}
We developed and implemented a comprehensive algorithmic framework to simulate and analyze traffic data, focusing on privacy-preserving techniques. It retrieves data by region and time, applies global and local shuffling for anonymity, and adds Laplace noise for differential privacy. Fig. \ref{opti} shows the trade-off between privacy and data utility in traffic density analysis using differential privacy. As the privacy budget, epsilon $\epsilon$, increases, the Mean Squared Error (MSE) and Mean Absolute Error (MAE) decrease, indicating less data distortion. A vertical dashed line marks the optimal $\epsilon$ that balances privacy with data accuracy, essential for effective traffic management while maintaining privacy. This point suggests that slight increases in $\epsilon$ can significantly enhance data utility without compromising privacy. By testing different privacy budgets ($\epsilon$), we found $\epsilon = 2$ minimizes errors like mean squared error (MSE) while preserving utility. This noise level is then applied to the dataset. Figure \ref{traf} illustrates the application of Iterative Shuffling with differential privacy to vehicular-traffic scenarios in Norwegian regions, ensuring fairness by preventing overrepresentation or underrepresentation. The comparison of original and noise-injected traffic densities shows high consistency, highlighting uniform noise distribution across regions.
\begin{figure}
    \centering
    \includegraphics[width=\linewidth, height = 5cm ]{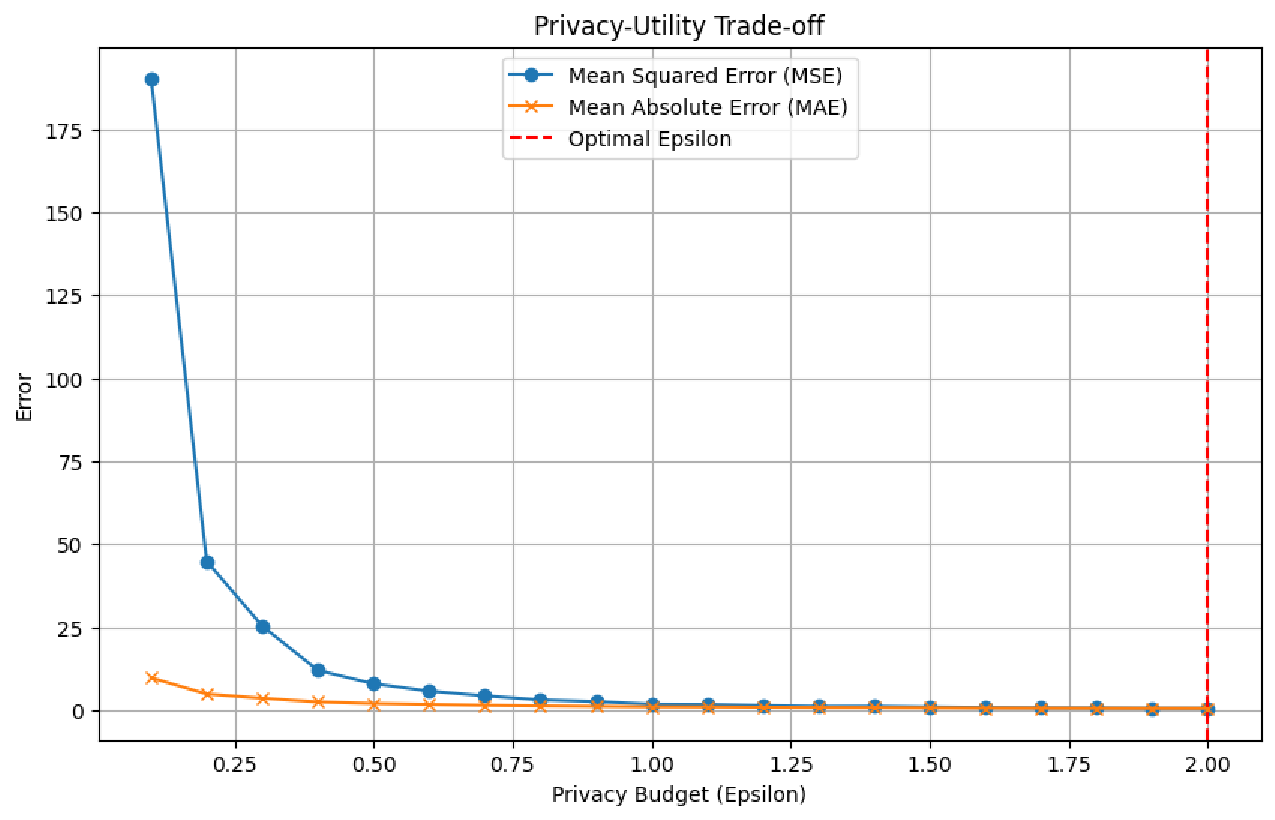}
    \caption{Privacy vs Utility}
    \label{opti}
\end{figure}
\begin{figure*}
    \centering
    \includegraphics[width=\linewidth, height = 6cm]{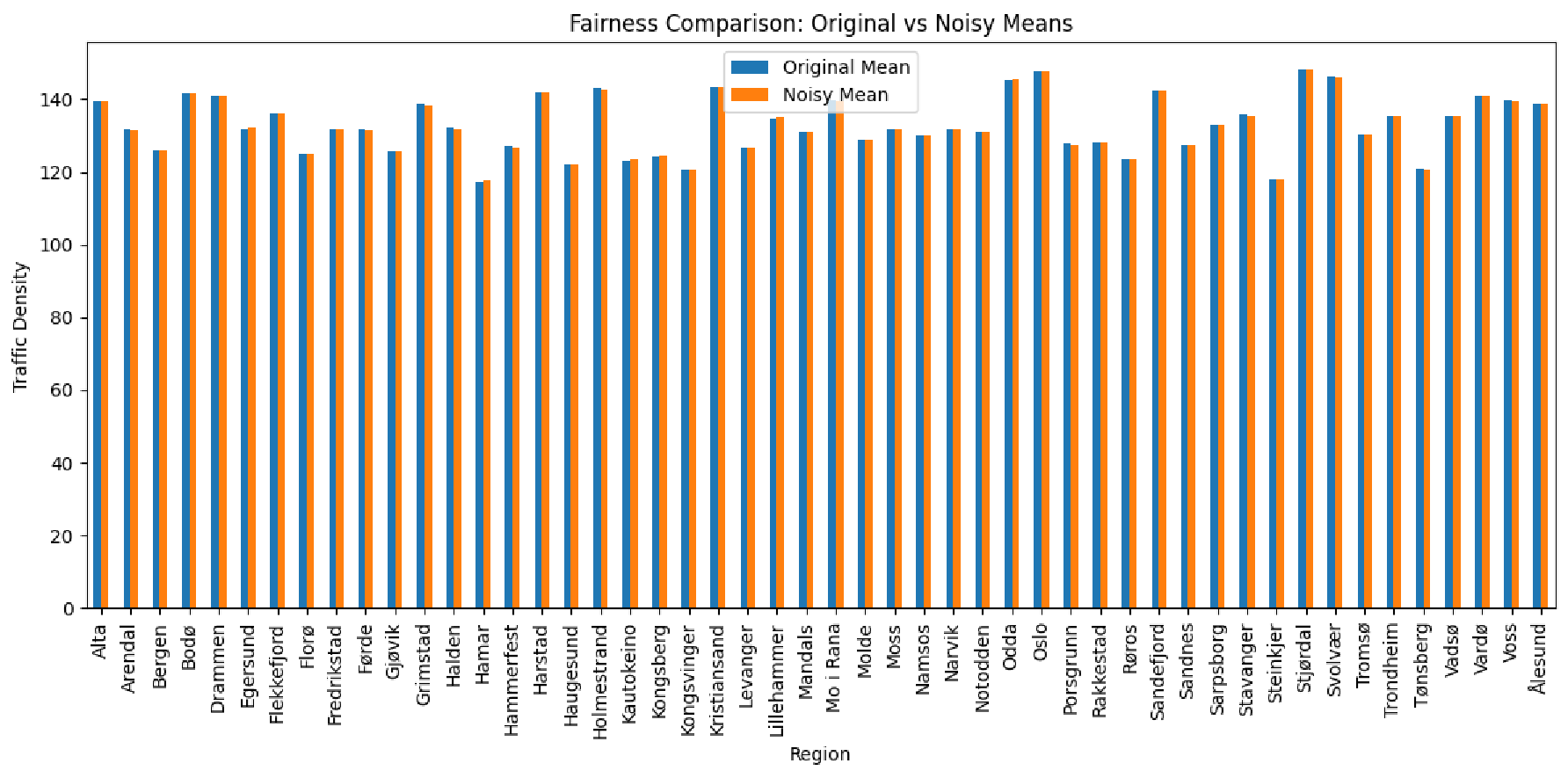}
    \caption{Uniform Application of Noise: Demonstrating unbiased noise distribution across Norwegian regions, ensuring no overrepresentation or underrepresentation.}
    \label{traf}
\end{figure*}

\par Figure \ref{heat} presents heatmaps of vehicular-traffic density at 17:00 and 7:00, scaled from 0 (no traffic) to 1 (maximum traffic). The 17:00 heatmap shows higher traffic density in several regions, indicating peak conditions as commuters return home. In contrast, the 7:00 heatmap predominantly displays lower traffic levels, with most regions showing minimal activity, characteristic of early morning hours before the rush begins. These visualizations illustrate how traffic density patterns vary temporally across Norwegian regions, highlighting the system's ability to capture such dynamics while maintaining privacy and fairness. Fig. \ref{heat} shows a balanced trade-off between privacy, utility, and fairness in traffic data across regions, with noise applied proportionally to maintain fairness. Even with Iterative Shuffling and noise, traffic patterns remain clear, helping identify congested areas without revealing individual details. 
\begin{figure*}
    \centering
    \begin{subfigure}{}
     \includegraphics[width=0.48\textwidth, height = 5cm]{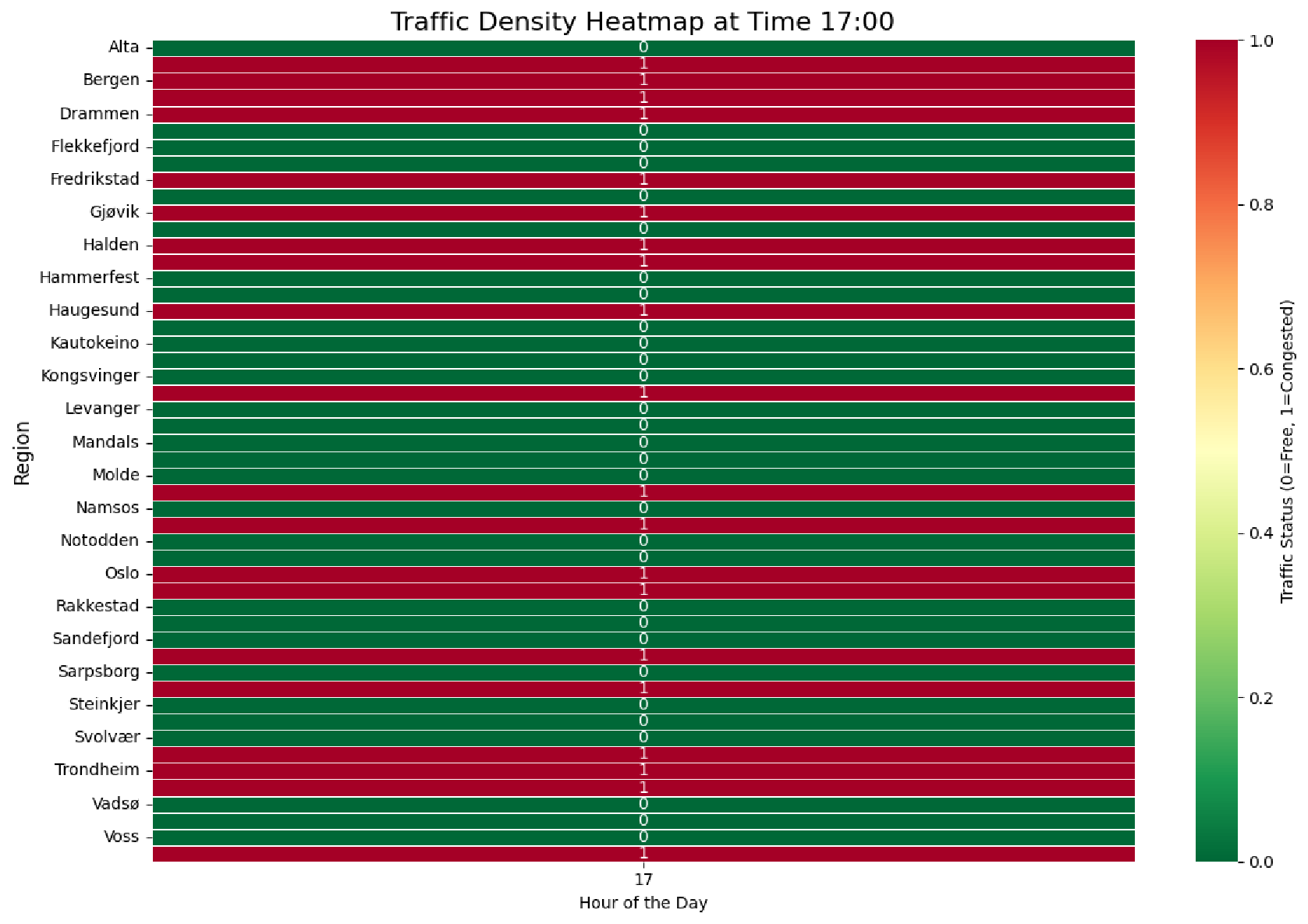}
    \end{subfigure}
    \label{heat1}
    \begin{subfigure}{}
    \centering
    \includegraphics[width=0.48\textwidth, height = 5 cm]{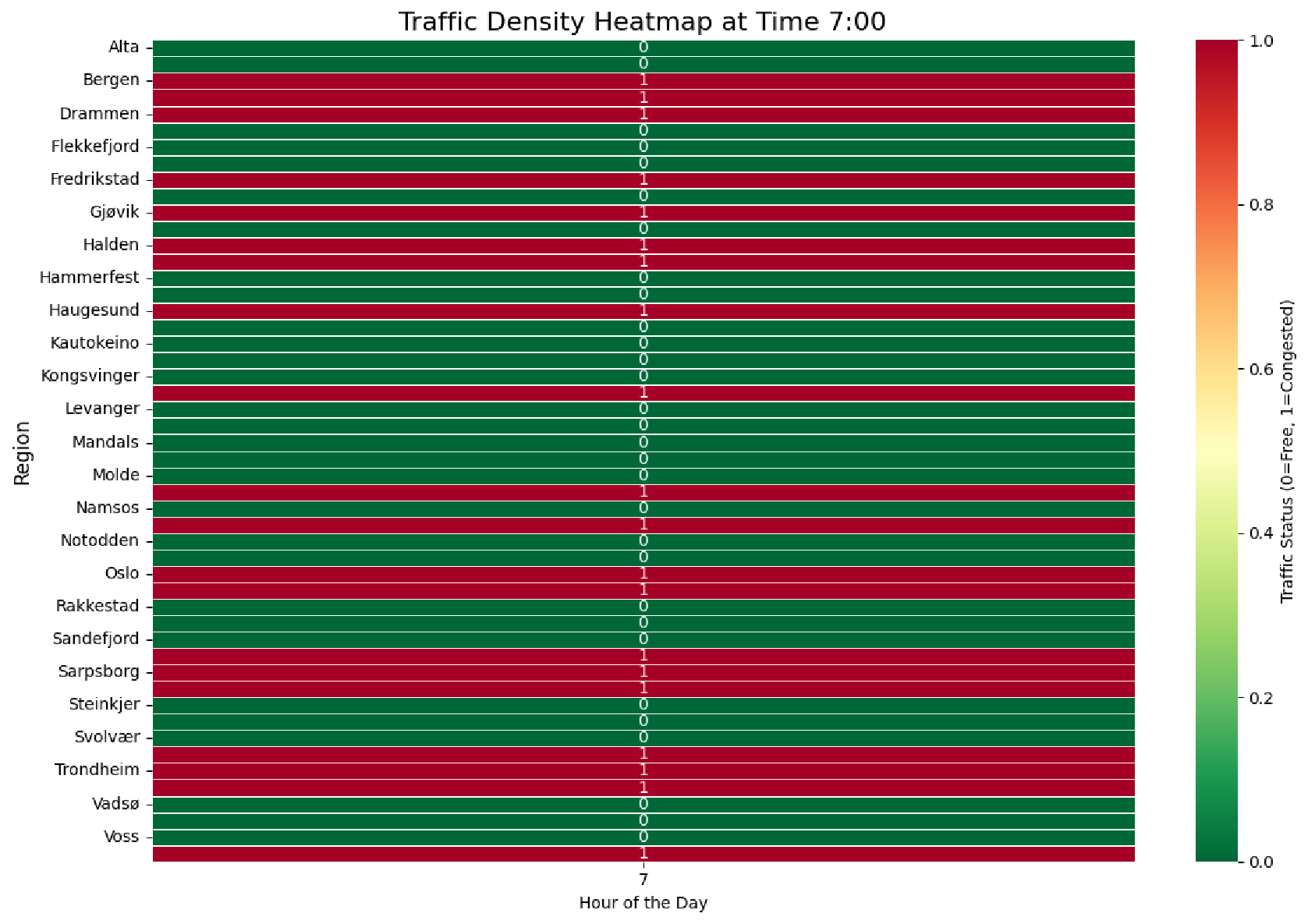}
    \label{heat2}
\end{subfigure}
\centering
\caption{Traffic congestion heatmap for Norway, showing conditions at 17:00 and 7:00 across different regions.  }
    \label{heat}
\end{figure*}

Fig. \ref{pred} shows traffic predictions across different regions for the next 24 hours, comparing results from the original data with those modified by differential privacy noise. The predictions from the original data capture typical daily traffic patterns, such as peak and off-peak hours, providing high precision that is useful for traffic management. However, the predictions with added noise, represented by dashed lines, introduce some variability and imprecision to obscure exact traffic volumes. This helps protect privacy by preventing the identification of detailed regional behaviours. Additionally, the figure highlights how combining noisy data with Iterative Shuffling improves fairness by ensuring that all regions are treated equally, with no region disproportionately represented or discriminated against due to precise traffic patterns.
\begin{figure}
    \centering
    \includegraphics[width=\linewidth ]{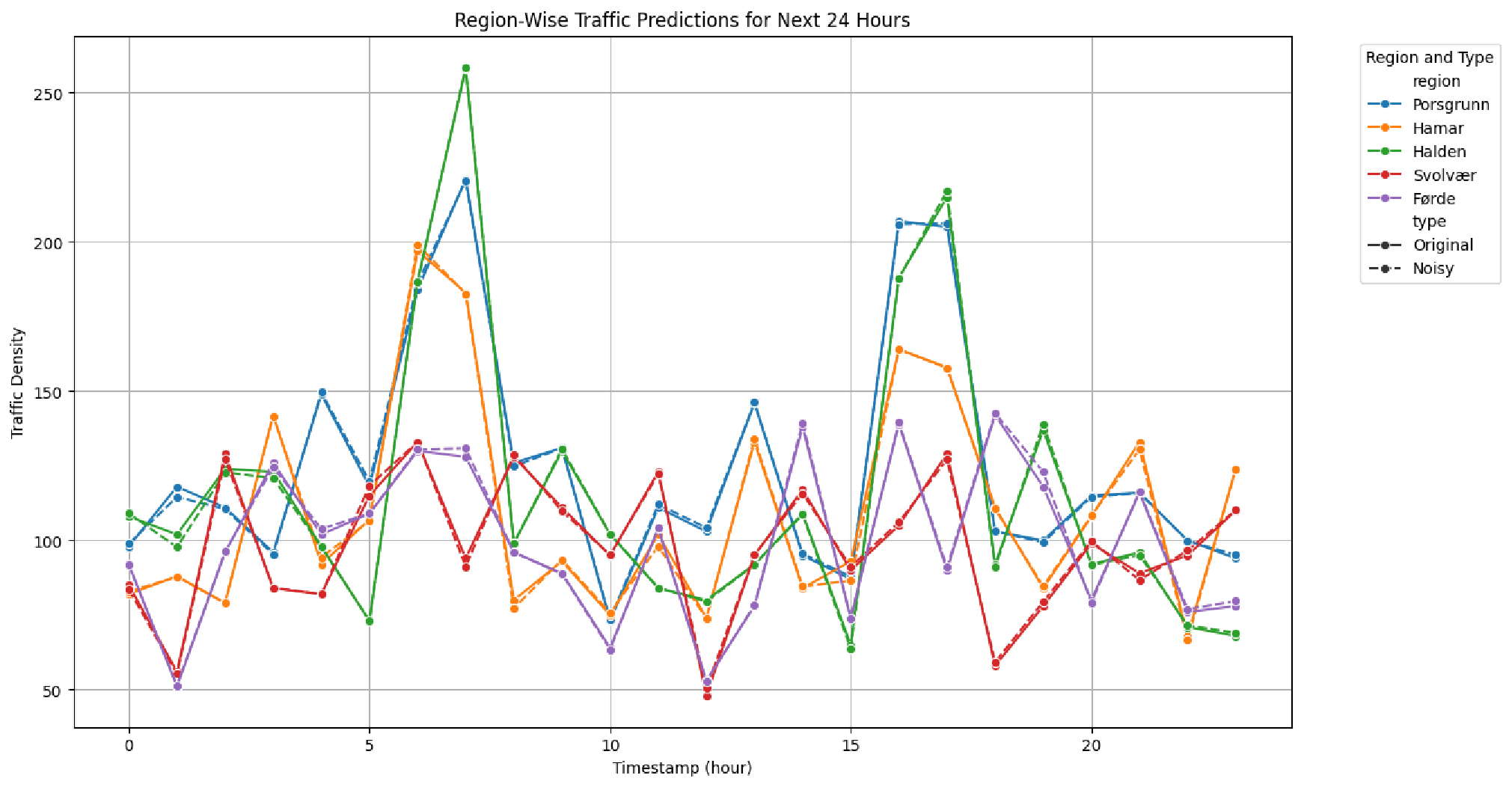}
    \caption{Region-Wise 24-Hour Traffic Predictions: Original vs. Noisy densities across Norwegian regions.}
    \label{pred}
\end{figure}
\subsection{Interactive Heatmap}
We develop an interactive heatmap using our proposed algorithm, showcasing vehicular-traffic conditions across Norway and its major cities at two distinct times. This heatmap demonstrates a well-calibrated balance between privacy, utility, and fairness, ensuring data remains useful while protecting individual information and treating all regions equitably. The heatmap shows vehicular-traffic intensity using colour gradations—darker colours represent higher traffic densities, while lighter colours indicate lower densities. It helps predict and identify traffic congestion, patterns, accident-prone areas, and road blockages at specific times without exposing exact traffic counts by displaying a noisy number of vehicles, ensuring sensitive information remains protected. The interactive heatmap also considers important factors like weather conditions, weekdays, and time of day to provide a realistic and accurate view of vehicular-traffic dynamics.  Iterative shuffling and differential privacy ensure that individual data remains protected while keeping the data useful for practical analysis. The heatmap’s interactivity enhances its utility, allowing users to explore vehicular-traffic conditions across various regions and time frames. The open street heatmap is available to download from \url{https://github.com/Psxxg/Fairly-Private/blob/main/traffic_prediction_with_graphs.html}
\section{Acknowledgment}
The Research Council of Norway has generously supported this research under the Transport 2025 Project CRISP (CRItical SPeed function and Automatic speed control function ahead of a dangerous road section), Project Number: 302327. The authors extend their profound gratitude for the funding and support, which have been pivotal in pursuing innovative solutions to improve road safety. 
\section{Conclusion and Future Direction}
Our proposed algorithm effectively balances privacy, utility, and fairness in vehicle management systems using differential privacy techniques. By carefully adding noise and using a two-step shuffling process, traffic data is useful for analysis while ensuring all areas are fairly represented. This method transforms how we manage traffic data privacy and could be used in different urban settings. We could improve the system by processing data in real-time to respond faster and work on a larger scale, potentially applying these techniques to other areas that need strong privacy protection.

\end{document}